\documentclass[12pt]{article}
\usepackage{amsmath,amsthm}
\usepackage{amssymb,amsfonts}
\usepackage[english]{babel}
\usepackage{gastex}
\usepackage{graphicx}
\usepackage{cite}
\usepackage[margin=3cm]{geometry}
\tolerance=1000
\DeclareMathOperator{\per}{per}
\newtheorem{Thm}{Theorem}
\newtheorem{Rem}{Remark}
\newtheorem{Prop}{Proposition}
\newtheorem{Cor}{Corollary}
\newtheorem{Lem}{Lemma}

%\date{\dateline{**, 2010}{** 2010}\\
%\small Mathematics Subject Classification: 68R15, 05C38}
\date{}
\title{On ternary square-free circular words}
\author{Arseny M. Shur\\
\small Ural State University\\[-0.8ex]
\small Ekaterinburg, Russia\\
\small \texttt{arseny.shur@usu.ru}\\}

\begin{document}
\maketitle

\begin{abstract} 
Circular words are cyclically ordered finite sequences of letters. We give a computer-free proof of the following result by Currie: square-free circular words over the ternary alphabet exist for all lengths $l$ except for 5, 7, 9, 10, 14, and 17.
Our proof reveals an interesting connection between ternary square-free circular words and closed walks in the $K_{3{,}3}$ graph. In addition, our proof implies an exponential lower bound on the number of such circular words of length $l$ and allows one to list all lengths $l$ for which such a circular word is unique up to isomorphism.
\end{abstract}

\section*{Introduction}

Circular words form an obvious variation of ordinary words: we just link up the ends of a word, thus getting a cyclic sequence of letters without begin or end. Circular words arise naturally as the labels of closed walks in graphs and automata.

Distinctive properties of circular words are mostly due to the following two features. First, the distance between two positions in a circular word can be counted in any of the two directions (``clockwise'' and ``counterclockwise''; or ``the distance from $i$ to $j$'' and ``the distance from $j$ to $i$''). Second, the factors of a circular word are ordinary words. The \textit{primal} factors are obtained just by cutting the circular word in some point. All primal factors of a circular word are cyclic shifts (or \textit{conjugates}) of each other. Thus, a circular word can be viewed as the representation of a conjugacy class of ordinary words.

Most properties of words can be expressed in terms of possessing or avoiding some regularities. For example, ``to be a square'' is a possession property, ``to be square-free'' is an avoidance property, and ``to be a minimal square'' is a sort of intersection of the first two properties. The study of avoidance properties of circular words is closely connected to the study of squares. In particular, Proposition~\ref{l1} below binds minimal squares and square-free circular words. 

The problem we study here has a graph theory origin. Namely, in \cite{AGHR} the existence of non-repetitive walks in coloured graphs was studied. The authors conjectured\footnote{Currie \cite{Cur} mentioned that this conjecture is originally due to R. J. Simpson but provided no reference. More details on non-repetitive colourings of graphs can be found in \cite{Gry}.} which cycles $C_l$ can be 3-coloured such that the label of any path is square-free. The existence of such a colouring is obviously equivalent to the existence of a circular square-free word of length $l$ over the ternary alphabet. This conjecture was proved by Currie \cite{Cur}:

\begin{Thm} \label{sf}
Square-free circular words over the ternary alphabet exist for all lengths $l$ except for $l=5, 7, 9, 10, 14, 17$.
\end{Thm}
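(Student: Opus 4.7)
The plan has three stages: set up the combinatorial model promised in the abstract (ternary square-free circular words $\leftrightarrow$ closed walks in $K_{3,3}$ with a suitable avoidance property), use it to construct square-free circular words of every admissible length, and rule out the six exceptional lengths.

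For the first stage I would properly edge-3-colour $K_{3,3}$ by a Latin square of order three, so that the three edges at each vertex carry the three different letters $a,b,c$; then any walk in $K_{3,3}$ reads off a ternary word, and the condition ``no factor $xx$'' translates to the walk being non-backtracking. I would then characterise which non-backtracking walks yield square-free circular words --- the emerging condition should forbid certain specific ``pinched'' patterns in the walk --- and use Proposition~\ref{l1} to reduce the analysis to minimal squares. Since $K_{3,3}$ is bipartite, closed walks have even length; circular words of odd length require an auxiliary derived-graph construction (or an involutive correspondence) that I would handle in parallel.

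For the second stage I would prove an \emph{extension lemma}: from any square-free circular word of length $l$, one can produce a square-free circular word of length $l+k$ for some fixed small $k$ by performing a local surgery on the corresponding walk, typically by inserting a short cycle of $K_{3,3}$ at a carefully chosen vertex. Combined with a small explicit list of seed words of several consecutive small lengths and a standard numerical-semigroup argument, this would cover every sufficiently large $l$; the remaining non-exceptional small $l$ are handled by exhibiting explicit examples. The fact that the extension step leaves several valid choices should also yield the exponential lower bound on the count advertised in the abstract, by producing a branching number strictly greater than one.

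For the third stage I would dispose of the exceptional lengths $l \in \{5,7,9,10,14,17\}$ by finite case analysis inside the $K_{3,3}$ model: the number of equivalence classes of non-backtracking closed walks of each such small length, modulo the automorphisms of the coloured $K_{3,3}$ and cyclic shifts of the walk, is small enough to enumerate explicitly and to exhibit a square in each class. The main obstacle I anticipate is the extension lemma, since a local surgery on a circular word can always create a new square straddling the insertion point; one must exploit the rigidity of the $K_{3,3}$ edge-colouring (rather than work at the raw word level) to guarantee the existence of a \emph{safe} surgery uniformly in $l$. Once this is established, both the positive part of the theorem and the exponential count follow cleanly, and the exceptional lengths reduce to a final short finite verification.
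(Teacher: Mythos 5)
Your plan is not the paper's construction, and as stated it has two genuine gaps, one of which is fatal for the non-existence half of the theorem. In your model the letters sit on the \emph{edges} of a properly $3$-edge-coloured $K_{3,3}$ and the circular word is read along a closed walk, so the word length equals the walk length; since $K_{3,3}$ is bipartite, no odd length is realisable at all, and you defer this to an unspecified ``derived-graph construction''. Worse, even for even lengths your correspondence is not onto the square-free circular words: with the Latin-square colouring, a word with no repeated adjacent letters closes up only when an alternating-sum invariant vanishes mod $3$, and for example the square-free circular word $(abacbc)$ of length $6$ (which is the unique one up to isomorphism) fails this for every conjugate and every relabelling, hence comes from no closed walk. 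Consequently your third stage cannot rule out the exceptional lengths: for $l=5,7,9,17$ the enumeration of closed walks is vacuous (while square-free circular words of the odd lengths $3,11,13,\dots$ do exist, so vacuousness proves nothing), and for $l=10,14$ exhibiting a square in every closed walk does not exclude square-free circular words lying outside your model. The paper avoids all of this by putting $K_{3,3}$ elsewhere: after Pansiot's binary encoding, the \emph{vertices} are the six jump factors $aba,\dots,cbc$, the edge weights $1,2,3$ record the number of $1$'s between consecutive $0$'s, and the word length is the total weight plus the walk length, so parity is no obstruction and every square-free circular codeword of length $\ge 4$ does correspond to a closed walk (Remark~\ref{r1}); the exceptions are then killed by a genuinely exhaustive analysis of codewords and simple-cycle combinations.

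The existence half of your proposal also rests on an unproved ``extension lemma'' (a safe local surgery, uniform in $l$, plus a numerical-semigroup argument), which you yourself flag as the main obstacle; but that is precisely where the work lies, and nothing in your sketch shows a safe insertion always exists. In the paper the analogous step requires a sufficient avoidance criterion for closed walks (Lemma~\ref{l2}, whose condition (b) about $t$-periodic factors of length $2t-2$ with closed-walk roots has no counterpart in your outline), a morphism $h$ sending square-free ternary words to concatenations of three weight-$18$ blocks, a delicate factor analysis of $h(u)$ (Lemma~\ref{l3}), and a table of sixteen tailored block replacements to reach every residue modulo $18$, each needing a verification that no new forbidden factor straddles the modification. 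So while your general intuition (walks in $K_{3,3}$, local modification, seeds plus arithmetic progression, branching giving the exponential count) points in the right direction, the model must be replaced by one that captures \emph{all} ternary square-free circular words before the exceptional lengths can be excluded, and the surgery step must be proved, not postulated.
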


The proof given by Currie consists of a construction allowing one to obtain a square-free circular word of any length $l\ge180$ and a computer search covering the cases $1\le l\le179$. In particular, this proof gives very little information about the structure of ternary square-free circular words and tells nothing about the nature of the exceptions found.

The aim of this paper is to give a computer-free proof of Theorem~\ref{sf}. First we reduce the problem for ternary circular words to the problem for their binary \textit{codewords}. Then we define a weight function on the $K_{3{,}3}$ graph in order to express the main feature of these codewords. Roughly speaking, a closed walk in the obtained graph, having weight $l$ and possessing some additional property, generates a square-free ternary circular word of length $l$. Finally we describe a way to construct such walks of any given weight $l\ge18$.

\section{Preliminaries}

We recall some notation and definitions on words, see \cite{Lo} for more background.

\smallskip
An \textit{alphabet} $\Sigma$ is a nonempty finite set, the elements of which are called \textit{letters}. \textit{Words} are finite sequences of letters. As usual, we write $\Sigma^*$ [$\Sigma^+$] for the set of all words over $\Sigma$, including [respectively, excluding] the \textit{empty word} $\lambda$. A word $u$ is a {\it factor} [{\it prefix}, {\it suffix}] of a word $w$ if $w$ can be represented as $\bar{v}u\hat{v}$ [respectively, $u\hat{v}$, $\bar{v}u$] for some (possibly empty) words $\bar{v}$ and $\hat{v}$. A {\it factor} [{\it prefix}, {\it suffix}] of $w$ is called \textit{proper} if it does not coincide with $w$. Words $u$ and $w$ are \textit{conjugates} if $u=\bar{v}\hat{v}$, $w=\hat{v}\bar{v}$ for some words $\bar{v}$ and $\hat{v}$. Conjugacy is obviously an equivalence relation. Words $u$ and $v$ are \textit{isomorphic} if $u=\sigma(v)$ for some bijection $\sigma$ of the alphabet(s).

A word $w\in\Sigma^*$ can be viewed as a function $\{1,\ldots,n\}\to\Sigma$. Then a \textit{period} of $w$ is any period of this function. The \textit{exponent} of $w$ is given by $\exp(w)=|w|/\per(w)$, where $\per(w)$ is the minimal period of $w$, $|w|$ is the length of $w$. The word $w$ is said to be $\beta$-\textit{free} if all its factors have exponents less than $\beta$. The prefix of $w$ of length $\per(w)$ is called the \textit{root} of $w$. A word of exponent 2 is a \textit{square}. The square is \textit{minimal} if it contains no other squares as factors. 

We use boldface letters to denote \textit{circular words}, which are finite \textit{cyclic} sequences of letters. In addition, we write $(w)$ for the circular word obtained by linking up the ends of the word $w$.

%If we replace the linear ordering of letters in a word $w$ with a cyclic ordering such that the last letter precedes the first one, we get a \textit{circular word}, denoted by $(w)$. To denote circular words, we also use boldface letters. 

There is an obvious one-to-one correspondence between circular words and conjugacy classes of ordinary words. Thus, the definition of $\beta$-freeness (in particular, square-freeness) naturally extends to circular words. Square-free circular words are closely connected to minimal squares:

\begin{Prop} \label{l1}
The word $u^2$ is a minimal square if and only if the circular word $(u)$ is square-free. 
\end{Prop}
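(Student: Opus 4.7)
I would start from the observation that a linear word of length at most $|u|$ is a factor of $u^2 = uu$ if and only if it is a factor of the circular word $(u)$; this dictionary is what makes the statement tick.

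The ``only if'' direction is then immediate: given a square $v^2$ in $(u)$, we have $|v^2|\le|u|<|u^2|$, so $v^2$ appears as a proper factor of $u^2$, contradicting minimality.

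For the converse I take a nonempty $v$ with $v^2$ a factor of $u^2$ and aim to show $v^2 = u^2$. Since $(u)$ depends only on the conjugacy class of $u$, I may first cyclically rotate $u$ so that $v^2$ becomes a prefix of $u^2$, writing $u^2 = v^2\beta$. Setting $n=|u|$ and $p=|v|$, the sub-cases $2p\le n$ (then $v^2$ is a factor of $u^2$ of length at most $n$, hence a factor of $(u)$, contradicting square-freeness) and $p=n$ (then $v^2=u^2$, as desired) are immediate. The whole burden of the proof falls on the remaining case $n/2<p<n$.

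That case is where I expect the only real work. The plan is to extract a bifix structure from $u^2 = vv\beta$. Set $q = n-p \in (0,n/2)$. The second $v$ in $u^2 = vv\beta$ begins at position $p+1$ of $uu$, and its first $q$ letters sit at positions $p+1,\ldots,n$ of the first copy of $u$---so they form the length-$q$ suffix of $u$; on the other hand, these same $q$ letters also equal the first $q$ letters of the first $v$, which are the length-$q$ prefix of $u$. Hence the length-$q$ prefix and suffix of $u$ coincide; calling this common word $Y$, I can write $u=YMY$. Then $uu = YMY\cdot YMY$ plainly contains $YY$ as a factor across the junction of the two copies of $u$, of length $2q<n$. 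By the opening dictionary, $YY$ is a square factor of $(u)$, giving the required contradiction. The bifix argument is the only step that needs thought; everything else is bookkeeping.
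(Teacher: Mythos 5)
Your proof is correct, and in the crucial case it takes a different route from the paper's. Both arguments share the same skeleton: the easy direction via the dictionary ``factors of $u^2$ of length at most $|u|$ are exactly the factors of $(u)$'', the trivial subcase $|v^2|\le|u|$, and a counting fact ($2(n-p)<n$) in the hard case. But for $n/2<p<n$ the paper keeps $u^2$ fixed, writes $uu=xvvy$ with $|x|\ge|y|$, splits $v=v_1v_2$ at the junction, and studies how the prefix $v_2v_1v_2$ and the suffix $v_1$ of $u$ overlap, ruling out overlapping occurrences of $v_1$ and extracting a square $z^2$ inside the primal factor $u$ itself. You instead invoke conjugacy invariance of $(u)$ to rotate so that $v^2$ is a prefix of $u^2$, read off that $u$ has a border $Y$ of length $q=n-p<n/2$, and exhibit the square $YY$ of length $2q<n$ straddling the junction of $u\cdot u$; your rotation step is verified easily (the suffix of $u^2$ from position $i$ coincides with a prefix of $u'^2$ for the conjugate $u'$ starting at $i$), and the conclusion $|v|=|u|$ transfers back to the original word since a factor of full length is the word itself. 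Your normalization buys a cleaner case analysis: no WLOG on $|x|$ versus $|y|$, no splitting of $v$, and no argument about overlapping occurrences; the paper's version avoids any re-rooting of the circular word and locates the forbidden square inside a single primal factor. One point you leave implicit (as does the paper) is that $(u)$ square-free forces $u$ primitive, so $u^2$ really is a square of exponent $2$; your own case $2p\le n$ already disposes of $u=w^k$, $k\ge2$, so nothing is missing.
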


\begin{proof}
Necessity is trivial, because $u^2$ contains all conjugates of $u$ as factors. To prove sufficiency by contradiction, assume that the circular word $(u)$ is square-free but $u^2$ is not a minimal square. Hence, $u^2$ contains a proper factor $v^2$. If $|v^2|\le|u|$, then $v^2$ is a factor of a conjugate of $u$, contradicting to the square-freeness of $(u)$. Now examine the case $|v^2|>|u|$. Then $uu=xvvy$ and without loss of generality $|x|\ge|y|$. So, we have $v=v_1v_2$, $u=xv_1=v_2v_1v_2y$. Moreover, since $|x|{+}|y|<|u|$, the prefix $v_2v_1v_2$ and the suffix $v_1$ overlap in $u$. The occurrences of $v_1$ cannot overlap, because $u$ is square-free. Thus, the suffix $v_1$ intersects only with the second occurrence of $v_2$ in the considered prefix. Hence, some word $z$ is a prefix of $v_1$ and simultaneously a suffix of $v_2$. Then the prefix $v_2v_1$ of $u$ contains $z^2$, which is impossible. 
\end{proof}

\section{Proof of the main result}

We prove Theorem~\ref{sf} in three main steps described in Sections~\ref{pan}--\ref{walks}. We introduce binary codewords and their basic properties in Section~\ref{pan}. As a corollary, we prove the statement of Theorem~\ref{sf} for $l\le8$. In Section~\ref{k_33} we represent these codewords by the walks in the weighted $K_{3{,}3}$ graph and claim Theorem~\ref{sf} for $9\le l\le21$. Finally, in Section~\ref{walks} we describe a way to construct the walks of given weight and prove Theorem~\ref{sf} for $l\ge22$.

\subsection{Reduction to binary words: Pansiot's encoding}\label{pan}

Any ternary square-free word $w$ of length $l\ge3$ can be encoded, up to isomorphism, by a codeword $\hat{w}\in\{0,1\}^+$ of length $l{-}2$. The encoding rule is as follows: 
$$
\hat{w}(i)=\begin{cases}
0,& \text{if } w(i{+}2)=w(i),\\
1,& \text{otherwise}.
\end{cases}
$$
For example,
$$
\arraycolsep=1pt
\begin{array}{rclllllllll}
w&=&a&b&c&b&a&c&b&c&\ldots\\
\hat{w}&=&1&0&1&1&1&0&&&\ldots\\
\end{array}
$$
This type of encoding for an arbitrary $k$-ary $\frac{k{-}1}{k{-}2}$-free word was suggested by Pansiot \cite{Pan} as a tool to operate with famous Dejean's conjecture on avoidable exponents (see \cite{Dej}). Pansiot's encoding can be easily extended to circular words: any ternary square-free circular word $\mathbf{w}$ can be encoded, up to isomorphism, by a binary circular codeword $\hat{\mathbf{w}}$ of the same length $l\ge3$, as in the following example:

\centerline{
\unitlength=0.8mm
\begin{picture}(60,27)(0,-3)
\gasset{Nw=0,Nh=0,Nframe=n}
\node(f0)(0,10){$a$}
\node(f1)(2.9,17.1){$b$}
\node(f2)(10,20){$c$}
\node(f3)(17.1,17.1){$b$}
\node(f4)(20,10){$a$}
\node(f5)(17.1,2.9){$c$}
\node(f6)(10,0){$b$}
\node(f7)(2.9,2.9){$c$}
\node(o1)(5,10){}
\node(o2)(10,15){}
\drawedge[curvedepth=2](o1,o2){}
\put(30,10){\makebox(0,0)[cb]{$\longrightarrow$}}
\node(c0)(40,10){$1$}
\node(c1)(42.9,17.1){$0$}
\node(c2)(50,20){$1$}
\node(c3)(57.1,17.1){$1$}
\node(c4)(60,10){$1$}
\node(c5)(57.1,2.9){$0$}
\node(c6)(50,0){$1$}
\node(c7)(42.9,2.9){$1$}
\node(p1)(45,10){}
\node(p2)(50,15){}
\drawedge[curvedepth=2](p1,p2){}
\end{picture} }

For convenience, we always write circular codewords starting with 0. Note that a word $w\in\{a,b,c\}^+$ of length $l\ge3$ admits Pansiot's encoding if and only if $w$ contains no squares of letters, and every word $u\in\{0,1\}^+$ is a codeword of such a word. The former statement (but not the latter!) remains true for circular words over $\{a,b,c\}$ and circular codewords of length $l\ge3$ over $\{0,1\}$. To prove Theorem~\ref{sf}, we produce for each $l\ge3$ a codeword of length $l$ which encodes a square-free circular word or show that no such codeword exists. Let us say that a codeword (circular or not) is \textit{square-free} if it encodes a square-free word. We start with a trivial but important observation:
\begin{itemize}
\item[($*$)] a circular codeword of length $l$ is square-free if and only if no one of its factors of length $\le l{-}2$ encodes a minimal square. 
\end{itemize}

\begin{Rem}\label{r123}
Circular words $(a)$, $(ab)$, and $(abc)$ are unique, up to isomorphism, square-free circular words of length 1, 2, and 3, respectively.
\end{Rem}

\begin{Lem}\label{l_le8}
A square-free circular codeword of length $l$ such that $4\le l\le8$ coincides with one of the words ${\mathbf c}_4=(0101)$, ${\mathbf c}_6=(011011)$, or ${\mathbf c}_8=(01110111)$.
\end{Lem}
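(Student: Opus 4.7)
The plan is to apply observation~$(*)$ and perform a finite case analysis. Since $l\le 8$, the forbidden codeword factors come from minimal ternary squares $u^2$ of length at most~$8$, hence with $|u|\le 4$. Up to alphabet isomorphism, the minimal squares of length $4$ and $6$ are $(ab)^2$ and $(abc)^2$, whose codewords are immediately computed to be $00$ and $1111$. By Proposition~\ref{l1}, the roots of the length-$8$ minimal squares are the conjugates of square-free circular words of length~$4$ over $\{a,b,c\}$; a brief extension argument starting from $ab$ shows that $(abac)$ is the only such circular word up to isomorphism, and encoding the four linear conjugates of $(abac)^2$ yields exactly the two codewords $010101$ and $101010$.

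Thus, for $4\le l\le 8$, a square-free circular codeword (normalized to begin with $0$) must (i)~avoid the linear factors $00$, $1111$, and, when $l\ge 8$, also $010101$ and $101010$, and (ii)~actually decode to a valid ternary circular word. Condition~(i) is handled by an easy classification of binary circular words according to the positions of their zeros: the $00$-free condition forces every cyclic gap between consecutive zeros to contain at least one $1$, and the $1111$-free condition caps each such gap at three $1$'s. Running through the cases gives, up to rotation, only the candidates $(0101), (0111)$ for $l=4$; the single word $(01011)$ for $l=5$; $(010101), (010111), (011011)$ for $l=6$; $(0101011), (0110111)$ for $l=7$; and $(01011011), (01110111)$ for $l=8$. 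Each candidate is tested by running the Pansiot decoding from $w(1)=a$, $w(2)=b$: the bit $\hat{w}(i)$, together with the requirement $w(i+2)\ne w(i+1)$, determines $w(i+2)$ uniquely, so the decoding succeeds precisely when the cyclic closure equations $w(l+1)=w(1)$ and $w(l+2)=w(2)$ hold without conflict with this requirement.

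Executing these ten decoding attempts shows that exactly three succeed, namely $(0101)\mapsto(abac)$, $(011011)\mapsto(abcacb)$, and $(01110111)\mapsto(abacbabc)$; these are the words $\mathbf{c}_4$, $\mathbf{c}_6$, $\mathbf{c}_8$ claimed in the lemma. All remaining candidates, including every candidate for $l=5$ and $l=7$, hit a contradiction at the closure step, which gives both the listed characterization and the non-existence at lengths $5$ and $7$. The main obstacle is really just bookkeeping: the enumeration in~(i) must catch every rotationally distinct candidate (for example, the three cyclic placements of zeros with gap pattern $(1,2,2)$ at $l=8$ all yield the same circular word $(01011011)$), and each decoding attempt must be executed carefully, but no individual step is intrinsically difficult.
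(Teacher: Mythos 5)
Your proposal is correct and takes essentially the same route as the paper: both derive the forbidden factors $00$, $1111$, $010101/101010$ from the minimal squares of period at most $4$, cut the search down to the same short candidate lists, and finish with a direct finite verification (your explicit decode-and-check-closure test is what the paper leaves as ``directly verified'', and it also replaces the paper's separate ternary-side arguments for $l=4$ and $l=5$). One minor slip: for $l=4,5$ observation ($*$) only forbids factors of length $\le l-2$, so excluding $1111$ at that stage is not yet justified by it; the dropped candidates $(1111)$, $(01111)$, $(11111)$ should simply be run through your closure test as well (they fail it), after which the argument is complete.
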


\begin{Cor}\label{c_le8}
Theorem \ref{sf} holds for $l\le8$.
\end{Cor}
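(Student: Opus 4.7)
The plan is to invoke observation $(*)$ and reduce the problem to a short finite enumeration. First I would compile a list of ``bad'' codeword factors: by direct inspection of ternary words of length at most $8$, the only minimal squares, up to isomorphism, are $abab$, $abcabc$, $abacabac$, and $abcbabcb$, with Pansiot codewords $00$, $1111$, $010101$, and $101010$ respectively. (For a minimal $u^2$ of length $\le 8$ the word $u$ is square-free, so for $|u|\le 4$ there are only a handful of candidates, each of which is easily checked by hand.) By $(*)$, any square-free circular codeword of length $l\le 8$ must avoid these four words as linear factors, and the last two are vacuously absent when $l<8$.

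Next I would enumerate the cyclic binary words of length $l$ satisfying these avoidances. Since every $0$ must be isolated, such a word decomposes uniquely as a cyclic concatenation of blocks $01^a$ with $1\le a\le 3$ (the upper bound coming from $1111$-freeness); running through $l\in\{4,5,6,7,8\}$ yields just a handful of candidates for each $l$, and screening out the ones containing $010101$ or $101010$ (only relevant at $l=8$) trims the list further.

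The last step is realisability: a binary cyclic word is the Pansiot codeword of a ternary circular word if and only if the reconstruction (in which $w_{i+2}$ equals $w_i$ when the $i$th codeword letter is $0$, and otherwise the third letter of $\{a,b,c\}$) closes up consistently on wrapping around. I would verify this directly by setting $w_1=a$, $w_2=b$, running the recursion, and checking $w_{l+1}=w_1$ and $w_{l+2}=w_2$. A slicker reformulation is that each codeword letter induces a permutation of the six ordered pairs of distinct ternary letters --- $0$ swaps a pair, $1$ sends $(x,y)$ to $(y,z)$ where $z$ is the third letter --- and realisability amounts to the product of these permutations along the codeword being the identity; the generated group is a copy of $S_3$, so this is a small group-theoretic check. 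Doing this, for $l\in\{4,6,8\}$ exactly one candidate survives, namely $\mathbf c_l$, and for $l\in\{5,7\}$ none does.

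Nothing in the argument is conceptually hard; the whole proof is a small case analysis. The crucial structural input is the brevity of the list in the first step, which follows from the fact that in a ternary word with no letter-square, a minimal square of length at most $8$ leaves only a handful of possibilities for its half. The only real risk is a bookkeeping slip --- missing a minimal square in Step~1 or a surviving candidate in Step~3 --- so the work is mostly in organising the enumeration so that it can be displayed compactly.
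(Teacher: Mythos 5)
Your proposal is correct, and its skeleton is the paper's own: pass to Pansiot codewords, use observation $(*)$ together with the list $00$, $1111$, $010101$, $101010$ of codewords of the short minimal squares, and finish with a finite enumeration of cyclic binary words with isolated $0$'s and no $1111$. Where you genuinely differ is in how candidates are accepted or rejected: the paper's Lemma~\ref{l_le8} handles $l=4$ and $l=5$ by direct combinatorial arguments on the ternary word (distances between repeated letters) and settles the remaining candidates by an unspecified ``direct verification'', whereas you make realisability explicit and uniform through the action on ordered pairs of distinct letters, where $0$ and $1$ act as right multiplication by a transposition and a $3$-cycle of $S_3$; since this action is regular, a cyclic binary word is a codeword of some ternary circular word exactly when the product along it is the identity. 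This buys an immediate parity obstruction (an odd number of $0$'s is fatal --- a fact the paper only obtains later, in Section~\ref{k_33}, from the alternation of left and right jumps), instantly kills $(0111)$, $(010101)$, $(0101011)$, $(01011011)$, and reduces the even cases $(01011)$, $(010111)$, $(0110111)$ to one-line computations; the paper's route avoids introducing the group structure but needs the ad hoc length-$5$ argument. Three small repairs: (i) the corollary also covers $l=1,2,3$, so add the trivial witnesses $(a)$, $(ab)$, $(abc)$ as in Remark~\ref{r123}; (ii) your Step~1 list omits the minimal square $aa$ --- harmless, because a realisable codeword never encodes a square of a letter, but this should be said; (iii) ``vacuously absent when $l<8$'' is not literally true (the length-$7$ candidate $(0101011)$ does contain $010101$); the correct statement is that by $(*)$ only factors of length at most $l-2$ matter, so the period-$4$ codewords impose no constraint for $l\le 7$, and your realisability check disposes of those candidates anyway.
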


\begin{proof}[Proof of Lemma~\ref{l_le8}]
A square-free circular word of length 4 has two occurrences of some letter, say $a$. Since neither of the two distances between these occurrences equals 1, both distances are equal to 2. Since the word $(abab)$ is not square-free, $(abac)$ is a unique, up to isomorphism, square-free circular word of length 4. Its codeword is ${\mathbf c}_4$. 

We make a few auxiliary notes before studying the lengths $l\ge5$. To use ($*$), we find the codewords of short minimal squares. By Proposition~\ref{l1}, the roots of such squares are the primal factors of square-free circular words. All squares are considered up to isomorphism. Since the encoded word cannot contain squares of period 1, we start with the period 2. By Remark~\ref{r123}, the only minimal square of period 2 is $abab$; it is encoded to $00$. Similarly, the only minimal square of period 3 is $abcabc$, and its codeword is $1111$. By ($*$), 
\begin{itemize}
\item[($**$)] square-free codewords of length $l\ge4$ [$l\ge6$] have no factors $00$ [respectively, $1111$].
\end{itemize}
The two minimal squares of period 4 are $abacabac$, coded by $010101$, and $abcbabcb$, coded by $101010$. Since any $0$ in a circular codeword is surrounded by two $1$'s, such a codeword of length $l\ge8$ avoids $010101$ and $101010$ if and only if it avoids $01010$.

Note that a ternary circular word $\mathbf{u}$ of length $5$ has at least two pairs of equal letters. For any such pair, one of the distances is either 1 (then $\mathbf{u}$ contains the square of a letter) or 2. Thus, the codeword of $\mathbf{u}$, if exists, has at least two $0$'s. If one of the distances between two $0$'s in this codeword is $2$, then $\mathbf{u}(i)=\mathbf{u}(i{+}2)$ and $\mathbf{u}(i{+}2)=\mathbf{u}(i{+}4)$ for some $i$ (the addition is modulo 5). Since $i{+}4=i{-}1 \pmod 5$, $\mathbf{u}$ contains the square of a letter and then has no codeword, a contradiction. Hence, the codeword of $\mathbf{u}$ has the factor $00$, implying that $\mathbf{u}$ is not square-free by ($**$).

In view of ($**$), the circular words (010101), (010111), and (011011) are the only candidates for square-free circular codewords of length 6, while (0101011) and (0110111) are the only such candidates of length 7. Square-free circular codewords of length 8 have no factor 01010, so the candidates are (01011011) and (01110111). It can be directly verified that all candidates apart from the words ${\mathbf c}_6$ and ${\mathbf c}_8$ fail. The lemma is proved.
\end{proof}

\begin{Rem}\label{r68}
The minimal squares of period 6 have the codewords $0110110110$, $1101101101$, and $1011011011$. For circular words, the avoidance of these words can be reduced to the avoidance of $011011011$ and $110110110$. Similarly, the avoidance of the codewords of minimal squares of period 8 is reduced to the avoidance of the word $11101110111$. 
\end{Rem}

\subsection{Reduction to the walks in the $K_{3{,}3}$ graph}\label{k_33}

Note that $0$'s in the codeword correspond to the ``jumps'' of one letter over another letter in the encoded word. There are six such jumps, represented by the factors $aba$, $bcb$, $cac$, $aca$, $bab$, and $cbc$. We call the first three jumps \textit{right} and the remaining jumps \textit{left}. It is easy to see that a right jump in a square-free circular word is followed by the left jump, and vice versa. Thus, the number of $0$'s in any square-free circular codeword is even. 

Which of the three left [right] jumps follows the given right [respectively, left] jump? It depends on the number of $1$'s which separate the $0$'s responsible for the jumps. Namely, the next jump is obtained from the previous one by
\begin{itemize}
\item[--] changing the central letter (e.\,g., $aba\leftrightarrow aca$) if the $0$'s are separated by $1$;
\item[--] changing the side letters (e.\,g., $aba\leftrightarrow cbc$) if the $0$'s are separated by $11$;
\item[--] switching the roles of the letters (e.\,g., $aba\leftrightarrow bab$) if the $0$'s are separated by $111$.
\end{itemize}
In order to describe square-free codewords we use the complete bipartite graph $K_{3,3}$, see Fig.~\ref{k33}. The jumps, partitioned into right and left, are represented by the vertices. The number of $1$'s needed to encode the sequence of two jumps equals the weight of the edge connecting the corresponding vertices. 

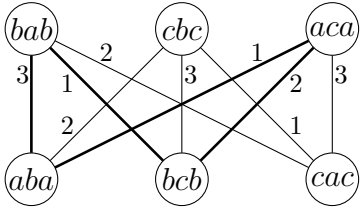
\begin{figure}[htb] 
\centerline{
%\unitlength=0.8mm
\begin{picture}(60,29)(0,1)
\gasset{Nw=7,Nh=7,AHnb=0,ELdist=0.3,ELpos=70}%,Nframe=n
\node(f1)(10,5){$aba$}
\node(f2)(30,5){$bcb$}
\node(f3)(50,5){$cac$}
\node(c1)(10,25){$bab$}
\node(c2)(30,25){$cbc$}
\node(c3)(50,25){$aca$}
\drawedge[linewidth=0.35](f1,c1){\footnotesize 3}
\drawedge[ELpos=30](f1,c2){\footnotesize 2}
\drawedge[ELpos=77,linewidth=0.35](f1,c3){\footnotesize 1}
\drawedge[linewidth=0.35](f2,c1){\footnotesize 1}
\drawedge[ELside=r](f2,c2){\footnotesize 3}
\drawedge[ELside=r,linewidth=0.35](f2,c3){\footnotesize 2}
\drawedge[ELside=r,ELpos=77](f3,c1){\footnotesize 2}
\drawedge[ELside=r,ELpos=30](f3,c2){\footnotesize 1}
\drawedge[ELside=r](f3,c3){\footnotesize 3}
\end{picture} }
\caption{\footnotesize\sl The graph of jumps in ternary circular words. The closed walk marked in boldface represents the square-free codeword $(01011010111)$ which encodes the circular word $(abacabcbabc)$.} \label{k33}
\end{figure}

\begin{Rem} \label{r1}
Each square-free circular codeword of length at least 4 corresponds to a closed walk in the weighted graph shown in Fig.~\ref{k33}. Conversely, each closed walk in this graph defines a circular codeword, but in general these codewords are not necessarily square-free. 
\end{Rem}

So, to prove the statement of Theorem \ref{sf} for a given $l$ we either exhibit a closed walk which defines a square-free circular codeword of length $l$, or prove that no such walk exists.

\medskip
A walk in the obtained graph is uniquely determined by the initial vertex and the sequence of edge weights. Due to symmetry, this sequence of weights determines whether the walk is closed independently of the initial vertex. Since we are interested in the codewords rather than the encoded words, we consider the walks just as sequences of weights (i.\,e., words over $\{1,2,3\}$). For example, the closed walks $11$, $22$, and $33$ define the codewords ${\mathbf c}_4$, ${\mathbf c}_6$, and ${\mathbf c}_8$ respectively. For convenience, we define the weight of a closed walk to be the total weight of its edges plus its length. Then, the weight of a closed walk equals the length of the circular codeword defined by this walk.

Any closed walk is a combination of simple cycles (a closed walk of length two is considered as a simple cycle also). Any simple cycle is defined by a circular word over $\{1,2,3\}$. We enumerate all simple cycles together with the corresponding circular codewords and their lengths:
\begin{equation}
\begin{array}{llr}
(11)&(0101)={\mathbf c}_4&4\\
(22)&(011011)={\mathbf c}_6&6\\
(33)&(01110111)={\mathbf c}_8&8\\
(1213)&(01011010111)&11\\
(1232)&(010110111011)&12\\
(1323)&(0101110110111)&13\\
(121212)&(010110101101011)&15\\
(123123)&(010110111010110111)&18\\
(132132)&(010111011010111011)&18\\
(131313)&(010111010111010111)&18\\
(232323)&(011011101101110110111)&21
\end{array} \label{tab1}
\end{equation}

In what follows, we write ${\mathbf u}_1$+\ldots+${\mathbf u}_n$ for the set of closed walks obtained as a combination of simple cycles ${\mathbf u}_1, \ldots,{\mathbf u}_n$.

\begin{Lem}\label{cycl}
All the codewords in (\ref{tab1}) are square-free.
\end{Lem}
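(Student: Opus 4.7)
The plan is to apply observation~$(*)$: a circular codeword of length~$l$ is square-free if and only if no factor of length at most $l{-}2$ is the Pansiot encoding of a minimal square. The three shortest codewords $\mathbf{c}_4$, $\mathbf{c}_6$, $\mathbf{c}_8$ are already handled by Lemma~\ref{l_le8}, so the task reduces to the remaining eight codewords in~(\ref{tab1}).

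To make the verification finite, I would first assemble the full list of forbidden factors. From the proof of Lemma~\ref{l_le8} and from Remark~\ref{r68}, cyclically avoiding every codeword of a minimal square of period at most~$8$ reduces to forbidding the six patterns $00$, $1111$, $01010$, $011011011$, $110110110$, and $11101110111$. Moreover, minimal squares of periods~$5$ and~$7$ do not exist: by Proposition~\ref{l1} such a square would yield a square-free ternary circular word of that length, contradicting Lemma~\ref{l_le8}. For every codeword in~(\ref{tab1}) of length at most~$17$, a short direct inspection for these six patterns thus settles square-freeness, and the length-$15$ case is the only one where even the longest pattern (length~$11$) fits.

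The genuinely subtler cases are the three length-$18$ codewords and the length-$21$ codeword: their factors of length $16$ or $18$ could, a priori, encode a minimal square of period~$9$ or~$10$, whose nonexistence is not yet available at this point in the paper. For these I would decode each codeword to its ternary circular word via the rule $w(i{+}2)=w(i)$ if $\hat{w}(i)=0$ and $w(i{+}2)\notin\{w(i),w(i{+}1)\}$ if $\hat{w}(i)=1$, starting from any two distinct letters, and verify square-freeness of the resulting ternary word directly. The main obstacle is purely the bookkeeping of finitely many cases; no new conceptual ingredient beyond the tools already developed is required. Useful shortcuts come from the binary periodicity of several codewords in~(\ref{tab1})---for instance, $(010111010111010111)$ has period~$6$ and $(011011101101110110111)$ has period~$7$---so that checking one period of the decoded word together with the junction constraint essentially finishes the job.
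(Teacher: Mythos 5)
Your argument is correct, and for the codewords of length at most 15 it is essentially the paper's argument: reduce via $(*)$, Remark~\ref{r68}, and the nonexistence of minimal squares of periods 5 and 7 to the six forbidden factors in~(\ref{list}), then inspect. You part ways on the three codewords of length 18 and the one of length 21, where you decode back to ternary circular words and verify square-freeness by brute force, on the grounds that the nonexistence of minimal squares of period 9 and 10 is ``not yet available.'' The paper in fact derives exactly that nonexistence on the spot, using only tools already introduced: no combination of the simple-cycle weights in~(\ref{tab1}) equals 9, and every closed walk of weight 10 belongs to (11)+(22), so its codeword contains 01010; hence there is no square-free circular codeword, and therefore (via Remark~\ref{r1} and the fact that a square-free circular word admits Pansiot encoding) no square-free ternary circular word, of length 9 or 10, so Proposition~\ref{l1} rules out minimal squares of periods 9 and 10. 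With that, a word of length at most 21 can only contain squares of period at most 10, and the same six-pattern inspection disposes of all eleven codewords uniformly --- considerably lighter than decoding and exhaustively checking a ternary circular word of length 21, though your more computational route does reach the same conclusion, and your periodicity shortcuts are reasonable. (A small slip in an aside: the pattern 11101110111 of length 11 already fits inside a factor of length $l-2$ when $l=13$, not only $l=15$; this does not affect the verification.)
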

\begin{proof}
The codewords ${\mathbf c}_4$, ${\mathbf c}_6$, and ${\mathbf c}_8$ are square-free by Lemma \ref{l_le8}. Note that there exist no closed walks generating a circular codeword of length 9. Further, a circular codeword of length 10 is generated by a walk from the set (11)+(22). Such a codeword is not square-free, because it certainly contains the factor $01010$. By Proposition~\ref{l1}, no minimal squares of period 9 or 10 exist. The codewords of minimal squares of period $p\le8$ (and hence, of period $p\le10$) are calculated in Section~\ref{pan}. Recall that the presence of such a square in a ternary circular word is indicated by one of the following factors in its codeword: 
\begin{equation}
00, 1111, 01010, 011011011, 110110110, 11101110111.\label{list}
\end{equation}
It can be directly verified that no one of the codewords from the table contains a factor from the list~(\ref{list}). Since a word of length at most 21 can contain squares of period at most 10, all the codewords in (\ref{tab1}) are square-free. 
\end{proof}

\begin{Lem}\label{l_9-21}
There are square-free circular codewords of length 16, 19, and 20. No such codewords of length 14 or 17 exist.
\end{Lem}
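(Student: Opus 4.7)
My plan is to use the correspondence from Section~\ref{k_33}: every square-free circular codeword of length $l\ge4$ is the label of a closed walk of weight $l$ in the weighted $K_{3,3}$, and by $(*)$ combined with the minimal-square codewords catalogued in Section~\ref{pan} and Remark~\ref{r68}, square-freeness is equivalent to avoiding the factors of~(\ref{list}), since $l\le21$ restricts the relevant squares to periods $\le10$. Any closed walk decomposes into simple cycles from~(\ref{tab1}), so for each of the five target lengths I enumerate the partitions of $l$ into cycle weights and analyse the gluings.

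For the three existence claims I proceed constructively. For $l=16$, a walk in $(11)+(22)+(22)$ with the $(11)$ sharing a vertex with each of the two $(22)$ cycles yields the codeword $(0110110101101101)$; the gap sequence $3,3,2,3,3,2$ between consecutive $0$'s shows directly that none of~(\ref{list}) is a factor. For $l=19$, a walk in $(22)+(1323)$ with the $(22)$ attached so as to separate the weight-$2$ edges in the Eulerian ordering produces, for example, $(0110111011011101011)$ with gaps $3,4,3,4,2,3$. For $l=20$, combining $(33)+(1232)$ with the $(33)$ at the weight-$3$ edge $aba$-$bab$ gives $(01011011101101110111)$ with gaps $2,3,4,3,4,4$; the absence of $11101110111$ and $011011011$ is immediate from these gaps.

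For the two impossibility claims I argue by exhaustion. For $l=14$ the only decompositions by cycle weight are $(33)+(22)$ and $(11)+(11)+(22)$; for $l=17$ they are $(11)+(1323)$ and $(22)+(1213)$. The key structural tool is that the weight-$i$ edges of $K_{3,3}$ form a perfect matching for each $i\in\{1,2,3\}$, which forces particular pairings of edge-ends at the high-degree vertices of the combined multigraph and rigidly constrains every Eulerian circuit. A short pairing analysis then yields a forbidden factor in each subcase: from $(33)+(22)$ the four-edge walk has gap sequence $4,4,3,3$ (giving $110110110$); from $(11)+(11)+(22)$ the four weight-$1$ edges cannot be fully interleaved with the two weight-$2$ edges, so two weight-$1$ edges end up consecutive (giving $01010$); from $(11)+(1323)$ three weight-$1$ edges concentrated on a single $K_{3,3}$ edge again force two consecutive weight-$1$ traversals (giving $01010$); and from $(22)+(1213)$ the two consecutive weight-$2$ edges at the degree-$2$ vertex of the multigraph are flanked by an edge of weight $3$, yielding $011011011$ or $110110110$ depending on the direction.

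The main obstacle is the impossibility part, since within each decomposition several shared-vertex choices have to be inspected and each admits several Eulerian orderings; the case-analysis is kept short by the automorphism group of the weighted $K_{3,3}$ (which acts transitively on each matching and collapses symmetric subcases) and by the rigidity of degree-$2$ vertices of the multigraph, at which the in-out pairing is unique and propagates the forbidden pattern along the walk. I expect each decomposition to reduce to at most two essentially distinct subcases, each resolved by inspection of the gap sequence.
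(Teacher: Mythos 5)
Your strategy is the same as the paper's (decompose weight-$l$ closed walks into the simple cycles of~(\ref{tab1}) and test the circular walk label against the forbidden factors), and your length-16 walk ($(221221)$, a conjugate of the paper's $(122122)$) and length-20 walk ($(123233)\in(33)+(1232)$, a valid alternative to the paper's $(133133)$) are correct, as are the decomposition lists for 14 and 17. However, the length-19 construction fails. In the circular codeword $(0110111011011101011)$ the $0$'s sit at positions $1,4,8,11,15,17$, so reading the gaps \emph{circularly} the last gap $3$ is followed by the first gap $3$ and then a gap $4$: the walk label is $(232312)$, which contains the forbidden circular factor $223$ (letters $w_6w_1w_2$). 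Correspondingly the codeword contains $011011011$, and since $00$ is avoided it even contains $1011011011$, a codeword of a minimal square of period $6$ (Remark~\ref{r68}); this factor has length $10\le l-2$, so by ($*$) the encoded circular word is not square-free. Your check of the gap sequence $3,4,3,4,2,3$ was simply not done around the wrap. Worse, the defect cannot be repaired within your chosen decomposition: any circular arrangement of the edge multiset $\{1,2,2,2,3,3\}$ of $(22)+(1323)$ avoiding $11,222,223,322,333$ must have the three $2$'s pairwise non-adjacent, hence be of the form $(2x2y2z)$ with $\{x,y,z\}=\{1,3,3\}$, and tracing such a word (say $212323$ from $aba$) shows it is not a closed walk in the weighted $K_{3,3}$. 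So no walk in $(22)+(1323)$ defines a square-free codeword of length 19; you need a different decomposition, e.g.\ $(123313)\in(1213)+(33)$ as in the paper.

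Two smaller gaps in the impossibility part: for $(22)+(1213)$ you only treat the case where the doubled weight-2 edge is distinct from the weight-2 edge of the 4-cycle; if it is the same edge, the three parallel weight-2 traversals force $222$ rather than $223$ or $322$ (still forbidden, but the case must be mentioned). For $(11)+(1323)$, the doubled weight-1 edge need not coincide with the weight-1 edge of the 4-cycle, so ``three weight-1 edges concentrated on a single edge'' is not the general situation; the factor $11$ is nevertheless forced because the doubled edge is attached at a vertex whose other endpoint has degree~2, so its two copies are traversed consecutively in any Eulerian circuit. These are fixable by a line or two each; the length-19 existence claim is the step that genuinely fails.
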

\begin{proof}
A codeword built from a closed walk never contains $00$ or $1111$, while $01010$ [$011011011$, $110110110$, $11101110111$] in the codeword corresponds to $11$ [$222$ or $223$, $222$ or $322$, $333$, respectively] in the closed walk. So, 
\begin{itemize}
\item[(${*}{*}{*}$)] a circular codeword does not contain a factor from the list~(\ref{list}) if and only if the label of the corresponding closed walk has no proper factors $11$, $222$, $223$, $322$, and $333$.
\end{itemize}
Since any walk from (22)+(33) contains $223$ and any walk from (11)+(11)+(22) contains $11$, a closed walk cannot define a square-free circular codeword of length 14. A similar check for the sets (22)+(1213) and (11)+(1323) shows that there exist no square-free circular codeword of length 17. On the other hand, the walks $(122122)\in(11)$+(22)+(22), $(123313)\in(1213)$+(33), $(133133)\in(11)$+(33)+(33) define square-free circular codewords of length 16, 19, and 20, respectively. 
\end{proof}

In the proof of Lemma \ref{cycl} we have shown that there are no square-free circular codewords of length 9 or 10. Hence Lemmas \ref{cycl} and \ref{l_9-21} give us

\begin{Cor}\label{c_9-21}
Theorem \ref{sf} holds for $9\le l\le21$.
\end{Cor}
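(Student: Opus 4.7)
The plan is to assemble the result by combining Lemmas \ref{cycl} and \ref{l_9-21} and checking that the pattern of lengths they yield matches Theorem \ref{sf} exactly on the range $9 \le l \le 21$. For this range, Theorem \ref{sf} asserts that square-free ternary circular words exist precisely for $l \in \{11,12,13,15,16,18,19,20,21\}$ and not for $l \in \{9,10,14,17\}$, so I must verify existence in nine cases and non-existence in four.

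For existence, I invoke Pansiot's encoding: a square-free circular codeword of length $l$ corresponds to a square-free ternary circular word of the same length (for $l \ge 3$). Lemma \ref{cycl} already supplies square-free codewords coming from the simple cycles of table~(\ref{tab1}); reading off the entries gives codewords of lengths $11, 12, 13, 15, 18, 21$ at once. Lemma \ref{l_9-21} covers the remaining three existence cases $l = 16, 19, 20$ via the explicit walks $(122122)$, $(123313)$, $(133133)$.

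For non-existence, I use Remark \ref{r1}: every square-free circular codeword of length $\ge 4$ is generated by some closed walk in the weighted $K_{3,3}$. Thus it suffices to enumerate, for each exceptional $l$, all ways of writing $l$ as a sum of weights of simple cycles (i.e., from $\{4,6,8,11,12,13,15,18,21\}$) and then to exclude each resulting family via condition ($*{*}{*}$) together with list~(\ref{list}). The proof of Lemma \ref{cycl} already disposes of $l=9$ (no decomposition exists) and $l=10$ (only $(11)+(22)$, whose walks necessarily contain $01010$). For $l=14$ the only decompositions are $(22)+(33)$ and $(11)+(11)+(22)$, and for $l=17$ only $(11)+(1323)$ and $(22)+(1213)$; all four are ruled out in Lemma~\ref{l_9-21}.

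Putting the two lemmas together finishes the corollary. I do not expect a real obstacle here: the substantive work — identifying the short simple cycles, showing their codewords are square-free, and checking that the few small decompositions for $l \in \{9,10,14,17\}$ all fail — has already been carried out in Lemmas \ref{cycl} and \ref{l_9-21}. The corollary itself is a short bookkeeping step that matches the produced list of admissible lengths against the exception list in Theorem~\ref{sf}.
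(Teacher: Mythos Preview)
Your proposal is correct and follows essentially the same approach as the paper: the corollary is deduced directly from Lemmas~\ref{cycl} and~\ref{l_9-21}, with the non-existence for $l=9,10$ extracted from the proof of Lemma~\ref{cycl}. You have simply spelled out in more detail the bookkeeping that the paper leaves implicit.
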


\subsection{Constructing walks of given weight}\label{walks}

In order to prove Theorem \ref{sf}, it remains to find, for any $l\ge22$, a closed walk which defines a square-free circular codeword of length $l$. A condition which ensures that a closed walk generates such a codeword is given by the following lemma.

\begin{Lem} \label{l2}
A closed walk having
\begin{itemize}
\item[(a)] no factors $11$, $222$, $223$, $322$, $333$, and 
\item[(b)] no factors of length $2t{-}2$ with the even period $t\ge4$ and the root being the label of a closed walk, 
\end{itemize}
defines a square-free circular codeword.
\end{Lem}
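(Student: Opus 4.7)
The plan is to argue by contradiction: suppose $W$ satisfies (a) and (b) but the circular codeword $C$ generated by $W$ is not square-free. By observation ($*$), $C$ then contains as a factor the linear codeword $\hat{u^2}$ of some minimal ternary square $u^2$; by Proposition~\ref{l1}, the root $u$ has length $T$ and $(u)$ is a square-free circular word. A short direct computation with Pansiot's encoding on $u^2$ shows that $\hat{u^2}$ has length $2T{-}2$, has period $T$, and that its prefix of length $T$ coincides with the circular codeword of $(u)$. By Remark~\ref{r1} this prefix is the codeword of some closed walk $W_u$ in $K_{3,3}$; let $K$ denote the length of $W_u$.

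I would first dispatch the small-period case $T\le 8$. For $T=2,3$ the codeword $\hat{u^2}$ is literally $00$ or $1111$; for $T=4,6,8$ the codewords of all minimal squares of those periods reduce, up to avoidance, to the remaining patterns $01010$, $\{011011011,110110110\}$, and $11101110111$ of list~(\ref{list}) respectively, as already shown in the proof of Lemma~\ref{l_le8} and in Remark~\ref{r68}; the periods $T=5,7$ contribute nothing because Lemma~\ref{l_le8} forbids square-free circular words of those lengths. Observation (${*}{*}{*}$) then turns condition (a) into the statement that $C$ avoids list~(\ref{list}), which contradicts the presence of $\hat{u^2}$ in $C$.

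For the main case $T\ge 9$ the argument has to be carried out at the level of the walk $W$. Since $K_{3,3}$ is bipartite, $K$ is even; and since the only closed walks of length $2$ are $(11),(22),(33)$ (giving $T\in\{4,6,8\}$) we have $K\ge 4$. The goal is to produce a factor of $W$ of length $2K{-}2$ with period $K$ and root $W_u$; since $K$ is even, $K\ge 4$, and $W_u$ labels a closed walk, this is exactly the configuration forbidden by (b). I would first reduce to the case where the occurrence of $\hat{u^2}$ in $C$ begins at a $0$-position (a block-start): the $T$ starting positions within the decoded ternary occurrence of $u^2$ correspond to the $T$ conjugates of $u$, and exactly $K\ge 1$ of them yield $0$-positions of $C$, so replacing $u$ by the appropriate conjugate achieves alignment. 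Under this alignment the block-starts in the first $T$ characters of the occurrence are exactly the $K$ edges of $W_u$, so $W$ contains $W_u$ as a walk-factor at the corresponding position; the periodicity $C[p+j]=C[p+j+T]$ for $0\le j\le T{-}3$ then forces the next $K{-}2$ edges of $W$ to repeat the first $K{-}2$ edges of $W_u$, which yields the promised $(2K{-}2)$-factor of $W$ with period $K$ and root $W_u$.

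The main obstacle I expect is exactly this codeword-to-walk translation: one must count block-starts inside $[p+T,p+2T{-}3]$ with care (this range contains $K$ block-starts if the last edge of $W_u$ has weight $\ge 2$ and only $K{-}1$ block-starts if that weight is $1$), and check in both sub-cases that $2K{-}2$---rather than the naive $2K$---is indeed the length guaranteed by the partial second period visible in $\hat{u^2}$. Once this bookkeeping is in place and the ``aligned start'' step is justified by the conjugate-shift argument above, the remaining ingredients (bipartiteness of $K_{3,3}$ giving $K$ even, and the exclusion of $K=2$ via $T\ge 9$) are immediate, and condition (b) delivers the contradiction.
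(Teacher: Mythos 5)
Your overall strategy is the same as the paper's: locate the codeword of a minimal square $u^2$ inside the circular codeword, pass to the closed walk of the codeword of $(u)$ (length $K$, even by bipartiteness), kill the short periods via ($*{*}{*}$) and list~(\ref{list}), and then exhibit inside the ambient walk a factor of length $2K-2$ with period $K$ and closed-walk root, contradicting (b). The paper does the last step by a slicker device: the codeword of $(u^2)$ is defined by the doubled walk, the codeword of the \emph{linear} word $u^2$ is obtained from it by deleting two adjacent letters, and this deletion corrupts at most two edge labels, leaving a $K$-periodic factor of the ambient walk of length at least $2K-2$. Your version replaces this by direct positional bookkeeping, which can be made to work, but as written it contains a genuine flaw.

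The flawed step is the ``alignment'' reduction. You claim that by ``replacing $u$ by the appropriate conjugate'' you may assume the occurrence of $\widehat{u^2}$ starts at a $0$-position. But the circular word contains $u^2$ at one fixed place and nothing more: if $u=v_1v_2$ and $u'=v_2v_1$, the word $u^2$ contains $v_2v_1v_2$ but in general not $(u')^2$, so the shifted occurrence of a full word of length $2T-2$ with period $T$ simply need not exist. Equivalently, the periodicity $C[p+j]=C[p+j+T]$ is known only for $0\le j\le T-3$ with the \emph{original} start $p$; moving the start right by $s$ shrinks this window by $s$, so your later appeal to the full range of $j$ at the aligned start is unjustified. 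Everything after that point leans on this alignment. The repair is to drop the alignment altogether: let $q_1<\dots<q_N$ be the $0$-positions inside the original occurrence (one checks $N=2K$ or $2K-1$, your ``last edge of weight $1$'' dichotomy); the $N-1\ge 2K-2$ gaps between consecutive $q_i$'s are fully visible edge weights of the ambient walk, they form a contiguous factor of period $K$ by the $T$-periodicity of $\widehat{u^2}$, and its root is a \emph{conjugate} of the label of $W_u$ --- which is again the label of a closed walk, so (b) still applies. With that modification your argument is correct and is essentially an unwound form of the paper's doubling argument; without it, the key step fails.
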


\begin{proof}
Assume that the circular codeword $\mathbf{w}$ is defined by some closed walk. If the circular word coded by $\mathbf{w}$ contains a square, then $\mathbf{w}$ contains a codeword of some minimal square $u^2$. Then the circular word $(u)$ is square-free by Proposition~\ref{l1}. Note that $|u|\ge4$, because the circular codewords defined by closed walks avoid the codewords $00$ and $1111$ of shorter squares. By Remark~\ref{r1}, the codeword of $(u)$ is defined by a closed walk. Let $t$ be the length of this walk. Since the graph $K_{3{,}3}$ is bipartite, $t$ is even. If $t=2$, the codeword of $(u)$ contains one of the factors listed in~(\ref{list}); the statement of the lemma now follows from (${*}{*}{*}$). So, let $t\ge4$.

``Doubling'' the closed walk which defines the codeword of $(u)$, we obtain the walk defining the codeword of $(u^2)$. Deleting two neighbouring letters in this codeword, we get the codeword of the word $u^2$. This deletion corrupts one or two labels in the ``doubled'' closed walk. The remaining part of this walk is a factor of the closed walk defining $\mathbf{w}$. This part has the period $t$ and the length at least $2t{-}2$. The lemma is proved.
\end{proof}

Thus, in the rest of the proof we construct the closed walks satisfying the conditions of Lemma~\ref{l2}.
We use a closed walk $(121212)$ of weight 15 and four closed walks of weight 18: $(122133)\in(11)$+(22)+(33), (123123), (132132), and (131313). The key role in the remaining considerations is played by the morphism $h:\{a,b,c\}^*\to\{1,2,3\}^*$, defined by
$$
h(1)=122133,\ h(2)=123123,\ h(3)=132132.
$$
Any circular word of the form $(h(u))$ is obviously a closed walk. Moreover, the following lemma holds.

\begin{Lem} \label{l3}
If $u\in\{a,b,c\}^*$ is a square-free word, then the word $h(u)$ contains none of the factors mentioned in Lemma~\ref{l2}. Moreover, if one replaces any block in $h(u)$ by the word $131313$ or $121212$, or replaces any two adjacent blocks by the word $131313121212$, then the resulting word also has none of the factors mentioned in Lemma~\ref{l2}.
\end{Lem}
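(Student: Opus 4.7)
The plan is to verify the two conditions of Lemma~\ref{l2} for $h(u)$ and for each of the three modification types described in the statement.

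Condition (a) --- avoidance of $11$, $222$, $223$, $322$, $333$ --- is handled by finite case analysis. First, I would inspect each of the five possible six-letter blocks appearing in a modified word --- $h(1)=122133$, $h(2)=123123$, $h(3)=132132$, $131313$, and $121212$ --- and verify that none contains a forbidden short factor internally. Since $u$ is square-free, consecutive $h$-blocks in $h(u)$ correspond to distinct letters, and the permitted replacements preserve this adjacency. I would then enumerate the finite list of ordered block pairs (including those involving the replacement blocks) and inspect a length-$4$ window straddling each boundary. All five blocks start with $1$ and end in a two-letter suffix from $\{33,23,32,13,12\}$, so the check is very short and no forbidden short factor crosses any boundary.

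Condition (b) --- no factor of length $2t-2$ with even period $t \ge 4$ whose root labels a closed walk --- is proved by contradiction, splitting on $t$. For $t=4$: by~(a), the length-$4$ root must be a cyclic rotation of one of $1213, 1232, 1323$; since a length-$6$ factor spans at most two blocks, a finite enumeration of block-pair windows rules out period $4$. For $t=6$: the length-$6$ root is (up to rotation) one of $121212, 123123, 132132, 131313, 232323$, and a length-$10$ factor spans at most three blocks, again a finite enumeration.

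For larger $t$, I would invoke a synchronization property. The five possible blocks are sufficiently distinct --- $h(1)$ has no nontrivial period, while $h(2)$ and $h(3)$ have period $3$ and the two replacement blocks have period $2$ --- that any factor of length at least $12$ admits a unique decomposition into block fragments. A period-$t$ repetition of length $2t-2$ then forces $6 \mid t$ and yields a genuine square $v^2$ in $u$ of block length $t/6$, contradicting the square-freeness of $u$. The main obstacle is carrying out the synchronization argument uniformly across all three permitted modifications: the substitutions $131313$, $121212$, and $131313121212$ introduce short internal periodicities (periods $2$ and $3$) that must not confuse the block identification near the substitution, so the boundary behaviour around each modification needs to be handled case by case.
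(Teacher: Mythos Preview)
Your overall strategy---verify (a) by local inspection, then split (b) into small $t$ by enumeration and large $t$ by synchronization---would eventually work, but it misses the paper's central simplification and contains a concrete error.

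The paper's proof rests on one observation you did not make: in each of $h(1),h(2),h(3)$ the letter $1$ sits exactly at positions $1$ and $4$, so in $h(u)$ consecutive $1$'s are always at distance~$3$. This single remark disposes of condition~(a) immediately and, more importantly, forces every period of every factor of $h(u)$ to be divisible by $3$; combined with parity this yields $6\mid t$ with no separate enumeration for $t=4,6,8,10$. The paper then writes the alleged bad factor as $xyx$ with $|y|=2$, aligns both copies of $x$ with the block grid (possible precisely because $6\mid t$), and uses the sharp bounds ``common prefix of two distinct blocks $\le 2$, common suffix $\le 1$'' to read off a square in $u$. For the modified words the paper does not redo synchronization: it notes that the pattern $131$ occurs nowhere in $h(u)$ and only within a window of width~$\le 5$ around the replacement $131313$, so in any bad $xyx$ with $t\ge 6$ the word $x$ cannot contain $131$; this forces $x$ to overlap the replacement in at most a length-$2$ prefix or a length-$1$ suffix, and the $h(u)$-argument then applies verbatim.

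Your proposal has a specific gap at $t=6$: you assert that the root must be a rotation of one of the five simple $6$-cycles $121212,123123,132132,131313,232323$. This is false. Many non-simple closed walks of length~$6$ exist, and several survive condition~(a); for instance $h(1)=122133$ itself is a closed walk (it lies in $(11)+(22)+(33)$), and the putative factor $1221331221$ contains none of $11,222,223,322,333$. An enumeration driven by your root list would therefore be incomplete. It happens that $1221331221$ cannot occur in $h(u)$ (it would force $h(1)h(1)$, hence a square in $u$), but you have not argued this, and the clean way to see it is exactly the distance-$3$ observation above. Your large-$t$ synchronization sketch is in the right spirit but is vaguer than what is needed; the paper's explicit prefix/suffix bounds are what make the block identification go through without a separate threshold analysis.
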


\begin{proof}
The distance between any two consecutive $1$'s in the word $h(u)$ equals 3. Hence, $h(u)$ has no factors from the list (a), and any even period $t\ge4$ of any factor of $h(u)$ is divisible by 6. Let us prove that $h(u)$ contains no $t$-periodic factors of length $2t{-}2$. 

Assume that such a factor $xyx$ exists, that is, $|x|=t{-}2$, $|y|=2$. Since $t$ is divisible by the length of the block, both occurrences of $x$ can be represented in the same form $v_1wv_2$, where $v_1$ is a suffix of a block, $v_2$ is a prefix of a block, and $w$ is a product of some blocks. So, $v_1wv_2yv_1wv_2$ is a factor of $h(u)$. Note that the length of a common prefix [suffix] of two different blocks is at most 2 [respectively, at most 1]. Since $|y|=2$, we have $|v_1|+|v_2|=4$ or $|v_1|+|v_2|=10$. In the first case the block $z=v_2yv_1$ is uniquely determined. Furthermore, if $|v_2|>2$, then $v_2$ determines the block $z$, thus implying that $h(u)$ contains the factor $wzwz$. Similarly, if $|v_1|>1$, then $v_1$ determines the block $z$, and $h(u)$ contains $zwzw$. Since $u$ is square-free, we get a contradiction. In the second case we have $z_2z_1=v_2yv_1$, where the blocks $z_1$ and $z_2$ are determined by $v_1$ and $v_2$, respectively. Hence, $h(u)$ contains the factor $z_1wz_2z_1wz_2$ which contradicts to the square-freeness of $u$. The first statement of the lemma is proved.

We prove the second statement for the word $131313$ (the two other cases are similar). Let $z$ denote the word obtained by replacing some block in $h(u)$ with $131313$. Obviously, none of the factors from the list (a) intersects $131313$, while the walks $1313$ and $3131$ are open so that their appearance as factors of $z$ is not restricted by the condition (b). Furthermore, the word $131313$ and any block have a common prefix of length at most 2 and a common suffix of length at most 1. Thus, the longest factor with the root $131313$ or $313131$ has the length at most 9 and is allowed by the condition (b). Concerning the other factors of $z$ of the form $xyx$, it is easy to see from the definition of $h$ that $x$ cannot contain $131$. Hence, $x$ intersects a prefix of $131313$ of length at most 2 or a suffix of $131313$ of length 1. Therefore, we can reproduce the argument of the previous paragraph to show that $z$ has no factors mentioned in the condition (b).
\end{proof}

\begin{Cor} \label{z3z2}
Suppose that $u\in\{a,b,c\}^*$ is a square-free word and $z_3$ [$z_2$, $z_{32}$] is the word obtained from $h(u)$ by replacing the last block with the word $131313$ [respectively, the last block with $121212$, two last blocks with $131313121212$]. Then the circular words $(z_3)$, $(z_2)$, and $(z_{32})$ considered as closed walks define square-free circular codewords. 
\end{Cor}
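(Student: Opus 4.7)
First, each of the three words is the label of a closed walk: the blocks $h(1)=122133$, $h(2)=123123$, $h(3)=132132$ have already been exhibited as closed walks of weight $18$, as have $131313$ (weight $18$) and $121212$ (weight $15$), and concatenating closed-walk labels yields a closed-walk label independently of the initial vertex. Thus $(z_3), (z_2), (z_{32})$ are closed walks, and by Lemma~\ref{l2} it suffices to verify, \emph{cyclically}, the two avoidance conditions (a) and (b).

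For cyclic factors that do not cross the junction between the substituted tail and the opening block $h(u_1)$, the conclusion is immediate from the second statement of Lemma~\ref{l3} applied to $z_3$, $z_2$, $z_{32}$ as linear words. Therefore only wrap-around cyclic factors remain. Condition~(a) is straightforward by inspection of six letters on either side of the junction: each substitute ends with $3$ (for $z_3$) or $2$ (for $z_2$ and $z_{32}$), preceded by the single letter $1$, while every $h$-block begins with $12$ or $13$; one checks that none of the short patterns $11, 222, 223, 322, 333$ spans the junction, regardless of which block $h(u_1)$ is.

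The technical core is condition~(b) at the wrap-around, and my plan is to reproduce the block-alignment argument from the proof of Lemma~\ref{l3}. Suppose for contradiction a wrap-around cyclic factor $xyx$ with $|x|=t-2$, $|y|=2$, even period $t\ge 4$, and root being a closed-walk label. Away from the substitute, consecutive $1$'s in the cyclic word are still $3$ positions apart; so if $x$ contains any $1$ lying in the $h$-block portion, the translation $i\mapsto i+t$ forces $3\mid t$ and hence $6\mid t$. Once $6\mid t$, both occurrences of $x$ have the same block offset, and the analysis of Lemma~\ref{l3} transfers: the constituent block $z=v_2yv_1$ (or pair $z_1z_2$) cannot coincide with the substitute, which occurs only once in the cyclic word, so it must equal an ordinary $h$-block and force a square inside $h(u_1)\cdots h(u_{n-1})$, hence a square in $u$ -- a contradiction.

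The principal obstacle is the small-period regime in which $x$ lies entirely within the substitute together with a short prefix of $h(u_1)$, so that $t$ need not be a multiple of $6$. I plan to handle this by a finite case check: the short periodic patterns that can straddle the junction have roots among $1313, 3131, 1212, 2121$, together with a handful of analogues for $z_{32}$ that involve the internal boundary between $131313$ and $121212$; and a direct verification -- in the spirit of the argument used to exclude short walks in the proof of Lemma~\ref{cycl} -- shows that none of these is the label of a closed walk in $K_{3{,}3}$. This completes the verification of both conditions of Lemma~\ref{l2} for $(z_3), (z_2), (z_{32})$, and the corollary follows.
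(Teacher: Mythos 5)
Your overall strategy (reduce to the two conditions of Lemma~\ref{l2}, use Lemma~\ref{l3} for factors inside the linear word, and treat the wrap-around separately) is reasonable, and your junction check for condition~(a) and the observation that $1313$, $3131$, $1212$, $2121$ are not labels of closed walks are correct. But the core of your wrap-around treatment of condition~(b) contains a step that fails as stated. You claim that if $x$ contains a $1$ lying in the $h$-block portion, then the translation $i\mapsto i+t$ forces $3\mid t$. This deduction only works if the letter at position $i+t$ is a $1$ sitting at a position $\equiv 1 \pmod 3$ with respect to the block structure; inside the substitutes $131313$ and $121212$ the $1$'s are two apart and occupy \emph{all} residues modulo~3 (the substitutes are block-aligned, so their $1$'s sit at relative positions $1,3,5$). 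Hence a $1$ of $x$ in an $h$-block whose mate under the period lands inside the substitute gives no divisibility conclusion at all. These mixed configurations -- one occurrence of $x$ meeting ordinary blocks, the other overlapping the substitute, with $t$ not necessarily a multiple of $6$ and possibly $6$, $8$, or larger -- are precisely the delicate ones, and they are covered neither by your ``$6\mid t$, same block offset'' case nor by your finite check, which only treats $x$ contained in the substitute plus a short prefix (essentially $t=4$). So the case split is not exhaustive and the key divisibility claim is unsound exactly where it is needed. The repair is the device used in the proof of Lemma~\ref{l3} itself: $x$ cannot contain $131$ (resp.\ $121$), because that factor occurs only inside the unique substitute, so $x$ meets the substitute in at most a two-letter prefix or a one-letter suffix, after which the block-alignment argument applies; your draft never invokes this.

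For comparison, the paper dispenses with the wrap-around analysis altogether: it observes that, by Lemma~\ref{l3}, \emph{every conjugate} of $z_3$ [$z_2$, $z_{32}$] avoids the factors of Lemma~\ref{l2} -- this is the very reason Lemma~\ref{l3} is stated for the replacement of an \emph{arbitrary} block (or pair of adjacent blocks), not just the last one -- and then notes that the circular words therefore have no such factors, so Lemma~\ref{l2} applies. If you prefer your explicit junction analysis, you must either import the $131$/$121$ exclusion argument to handle the mixed configurations described above, or restructure the proof along the paper's lines; as written, the argument for condition~(b) at the wrap-around is incomplete.
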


\begin{proof}
By Lemma~\ref{l3}, any conjugate of the word $z_3$ [$z_2$, $z_{32}$] has none of the factors mentioned in Lemma~\ref{l2}. Then the circular words $(z_3)$, $(z_2)$, and $(z_{32})$ also have no such factors. The result now follows from Lemma~\ref{l2}.
\end{proof}

\begin{Cor} \label{18n}
Theorem \ref{sf} holds for $l=18n$ and $l=18n+15$ for any $n\ge1$.
\end{Cor}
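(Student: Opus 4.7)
The plan is to apply Corollary~\ref{z3z2} to a carefully chosen ternary square-free seed word for each target length $l$. First I would record the weights of the relevant walks: each block $h(1)$, $h(2)$, $h(3)$ has length $6$ and edge-weight sum $1+2+2+1+3+3 = 1+2+3+1+2+3 = 1+3+2+1+3+2 = 12$, so weight $18$; the walk $131313$ has weight $6+12=18$; the walk $121212$ has weight $6+9=15$; and $131313121212$ has weight $12+21=33$. Recall also that the weight of a closed walk equals the length of the circular codeword it defines.

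For $l=18n$ with $n\ge 1$, I would pick, by Thue's classical theorem on ternary square-free words, any square-free word $u\in\{a,b,c\}^{*}$ with $|u|=n$. Corollary~\ref{z3z2} then tells me that the circular word $(z_3)$ obtained from $h(u)$ by replacing the last block with $131313$ defines a square-free circular codeword. Since $131313$ has the same weight as each block $h(i)$, the replacement leaves the total weight unchanged, so $(z_3)$ has weight $18n$, yielding the required codeword of length~$18n$. (In the degenerate case $n=1$ this just reproduces $(131313)$ from the simple-cycle table.)

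For $l=18n+15$ with $n\ge 1$, the plan is analogous but with a lighter final block. I would take a square-free word $u$ of length $n+1$ (again via Thue) and form $z_2$ by replacing the last block of $h(u)$ with $121212$. Corollary~\ref{z3z2} again delivers a square-free circular codeword, and its weight is $18(n+1)-18+15 = 18n+15$, as required. Equivalently, one could use $z_{32}$ on the same seed: the replacement of two blocks (weight $36$) by $131313121212$ (weight $33$) gives the same total $18n+15$.

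The hard part, I expect, is not in this corollary at all: all the structural content already lives in Lemma~\ref{l3} and Corollary~\ref{z3z2}. The only external ingredient needed here is Thue's existence of ternary square-free words of every length, after which the two cases follow from routine weight arithmetic.
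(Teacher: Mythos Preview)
Your proposal is correct and follows essentially the same approach as the paper: invoke Corollary~\ref{z3z2} on a ternary square-free seed word (whose existence is guaranteed by Thue) and read off the weight. The paper's own proof is terser---it simply says to take $|u|=n$ and apply $(z_3)$ and $(z_2)$---whereas you are more explicit about the weight arithmetic and correctly note that the $(z_2)$ case requires $|u|=n{+}1$ to hit $18n+15$; this is a harmless indexing detail, not a genuine difference in method.
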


\begin{proof}
Take an arbitrary square-free word $u\in\{a,b,c\}^*$ of length $n$. The walks $(z_3)$ and $(z_2)$ described in Corollary \ref{z3z2} define square-free circular codewords of the required length.
\end{proof}

\begin{proof}[Proof of Theorem \ref{sf}: remaining cases]
In order to get square-free circular codewords of length $18n{+}m$ for any $m\le17$ we slightly modify the circular words of type $(z_3)$, $(z_2)$, or $(z_{32})$ using short closed walks that are already known to define such codewords. Note that due to symmetry we can choose the letter which occupies any single fixed position in $u$.

Choose the block preceding $131313$ in $(z_3)$ [preceding $121212$ in $(z_2)$] to be $h(a)$ and consider the closed walk $(133133)$ of weight 20. If we replace the chosen block by $133133$, the resulting closed walk will satisfy the conditions of Lemma~\ref{l2}. Indeed, the block preceding $133133$ is not equal to $h(a)$, because $u$ is square-free, and hence has the common suffix of length $\le1$ with $133133$. So, we get no forbidden factor whose root is a conjugate of $133133$. Suppose that a factor of the form $xyx$ with $|y|=2$ intersects $133133$. It is easy to see that $x$ contains neither $133133$ nor the factor $131$ [respectively, $121$] which follows $133133$. Hence, $xyx$ ends with some prefix of $133133$. The length of this prefix is at most 2, because neither block begins with 133. Then $x$ begins with a suffix of length at least 2 of some block; such a prefix uniquely determines a block, so we discover a square in $u$ as in the proof of Lemma~\ref{l3} to obtain a contradiction. Thus, we obtained square-free circular codewords of length $18n{+}2$ and $18n{-}1$ for any $n\ge2$.

The argument in the remaining cases is similar or even easier than the one above. So, we just draw a table showing, for each $m$, what replacement in which circular word should be made in order to get a walk defining a square-free circular codeword of length $18n+m$ (the minimum possible $n$ is given in the second column of this table).\\[10pt]
\centerline{
\begin{tabular}{|r|rl|c|rcl|}
\hline
$m$&\multicolumn{2}{c|}{start with}&additional requirement&\multicolumn{3}{c|}{replace}\\
\hline
1&$(z_{32})$&$n{\ge}2$&no&131313&${\mapsto}$&13121213\\
2&$(z_3)$&$n{\ge}2$&block $s=122133$ precedes $131313$&$s$&${\mapsto}$&133133\\
3&$(z_2)$&$n{\ge}2$&block $s=122133$ precedes $121212$&$s$&${\mapsto}$&12212332\\
4&$(z_3)$&$n{\ge}1$&no&131313&${\mapsto}$&13121213\\
5&$(z_2)$&$n{\ge}2$&block $s=123123$ precedes $121212$&$s$&${\mapsto}$&12332133\\
6&$(z_3)$&$n{\ge}2$&block $s=122133$ precedes $131313$&$s$&${\mapsto}$&12212332\\
7&$(z_{32})$&$n{=}2$&no&$131313$&${\mapsto}$&1221312213\\
&&$n{\ge}3$&block $s=122133$ precedes $131313$&$s$&${\mapsto}$&1221312213\\
8&$(z_3)$&$n{\ge}2$&block $s=123123$ precedes $131313$&$s$&${\mapsto}$&12332133\\
9&$(z_2)$&$n{\ge}2$&block $s=122133$ precedes $121212$&$s$&${\mapsto}$&1221312323\\
10&$(z_3)$&$n{\ge}2$&block $s=122133$ precedes $131313$,&$s$&${\mapsto}$&1221312213\\
&&&block $132132$ does not follow $131313$&&&\\
11&$(z_2)$&$n{\ge}2$&block $s=123123$ precedes $121212$&$s$&${\mapsto}$&1233212332\\
12&$(z_3)$&$n{\ge}2$&block $s=122133$ precedes $131313$&$s$&${\mapsto}$&1221312323\\
13&$(z_2)$&$n{\ge}1$&block $s=122133$ follows $121212$&$s$&${\mapsto}$&122122\\
14&$(z_3)$&$n{\ge}2$&block $s=123123$ precedes $131313$&$s$&${\mapsto}$&1233212332\\
16&$(z_3)$&$n{\ge}1$&block $s=122133$ follows $131313$&$s$&${\mapsto}$&122122\\
17&$(z_2)$&$n{\ge}1$&block $s=122133$ precedes $121212$&$s$&${\mapsto}$&133133\\
\hline
\end{tabular} }\\[12pt]
Since Theorem~\ref{sf} is already proved for the cases $l=18n$, and $l=18n+15$ (Corollary~\ref{18n}), this table provides the proof for all $l\ge33$ and also for $l=22$ and $l=31$. To cover the cases $l=24,26,28,30,32$ we just take the ``replacing blocks'' from the corresponding rows of the table. Finally, the walks $12213132$, $12321323$, $1221221213$, and $1221221323$ provide square-free circular codewords of lengths $23,25,27$, and 29, respectively. In view of Corollaries~\ref{c_le8} and~\ref{c_9-21}, the proof of Theorem~\ref{sf} is finished.
\end{proof}

\section{Concluding remarks}

With the aid of the construction defined in Section~\ref{walks}, it is easy to show that the number of ternary square-free circular words grows exponentially with the length. Indeed, it is well known that there are exponentially many ternary square-free words of length $n$; each of these words can be used to build at least one ternary square-free circular word of length $18n{+}m$ for any $m=0,\ldots,17$. From computer experiments, we learn that the growth rate for the set of ternary square-free circular words is approximately $1.3$. 

This growth rate obviously cannot exceed the growth rate for the set of ternary square-free ordinary words. For the latter one, a nearly exact value is now known \cite{Sh}. It is $1.30176...$, so it is quite intriguing whether the two considered growth rates coincide.

Our proof can also be used to list the lengths for which the ternary square-free circular word is unique up to isomorphism: these are $1,2,3,4,6,8,11,12,13,15,16,21$; thus the smallest length for which there exist two non-isomorphic ternary square-free circular words is 18. This result can be interesting in terms of ``unique colourability'' of graphs.

\end{document}